  \providecommand\BibTeX{{%
    \normalfont B\kern-0.5em{\scshape i\kern-0.25em b}\kern-0.8em\TeX}}}
\keywords{visualization recommendation, recommendation system}
\title{Interactive Visualization Recommendation with Hier-SUCB}
\author{Songwen Hu}
\affiliation{%
 \institution{Georgia Institute of Technology}
 \city{Atlanta}
 \state{Georgia}
 \country{USA}}
\email{shu343@gatech.edu}	
\author{Ryan A. Rossi}
\affiliation{%
 \institution{Adobe Research}
 \city{San Jose}
 \state{California}
 \country{USA}}
\email{ryrossi@adobe.com}
\author{Tong Yu}
\affiliation{%
 \institution{Adobe Research}
 \city{San Jose}
 \state{California}
 \country{USA}}
\email{tyu@adobe.com}
\author{Junda Wu}
\affiliation{%
 \institution{University of California, San Diego}
 \city{San Diego}
 \state{California}
 \country{USA}}
\email{juw069@ucsd.edu}	
\author{Handong Zhao}
\affiliation{%
 \institution{Adobe Research}
 \city{San Jose}
 \state{California}
 \country{USA}}
\email{hazhao@adobe.com}
\author{Sungchul Kim}
\affiliation{%
 \institution{Adobe Research}
 \city{San Jose}
 \state{California}
 \country{USA}}
\email{sukim@adobe.com}
\author{Shuai Li}
\affiliation{%
 \institution{John Hopcroft Center, Shanghai Jiao Tong University}
 \city{Shanghai}
 \country{China}}
\email{shuaili8@sjtu.edu.cn}
\begin{document}

\begin{abstract}
Visualization recommendation aims to enable rapid visual analysis of massive datasets. 
In real-world scenarios, it is essential to quickly gather and comprehend user preferences to cover users from diverse backgrounds, including varying skill levels and analytical tasks. 
Previous approaches to personalized visualization recommendations are non-interactive and rely on initial user data for new users. As a result, these models cannot effectively explore options or adapt to real-time feedback.
To address this limitation, we propose an interactive personalized visualization recommendation ($\textbf{PVisRec}$) system that learns on user feedback from previous interactions. 
For more interactive and accurate recommendations, we propose $\textbf{Hier-SUCB}$, a contextual combinatorial semi-bandit in the PVisRec setting. 
Theoretically, we show an improved overall regret bound with the same rank of time but an improved rank of action space. 
We further demonstrate the effectiveness of $\textbf{Hier-SUCB}$ through extensive experiments where it is comparable to offline methods and outperforms other bandit algorithms in the setting of visualization recommendation.
\end{abstract}

\maketitle
\section{Introduction}

\begin{figure}[t]
  \centering
  \includegraphics[width=0.8\columnwidth]{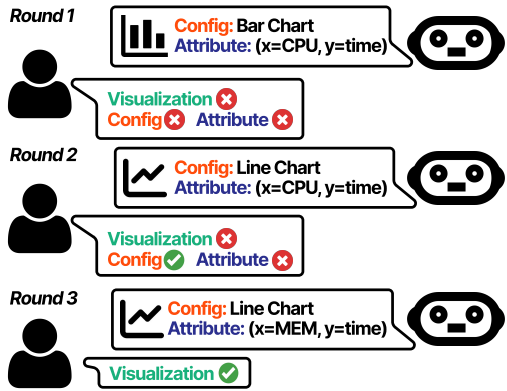}
  \vspace{-2em}
  \caption{
  An example of interactive PVisRec: A software engineer seeks a useful visualization for a system log dataset. 
  In each round, the agent recommends a visualization and receives user feedback. If negative, it gathers preferences on attributes and configuration separately to refine its model (R1). It is possible that the user likes attributes or configurations but not the visualization (R2). By learning users' feedback, the agent will recommend high-quality visualization accepted by the user (R3).
  } 
  \Description{An example of interactive PVisRec: A software engineer seeks a useful visualization for a system log dataset.}
  \label{fig:head}
  \vspace{-2em}
\end{figure}

Data visualization has been widely used across domains (i.e., public health~\cite{jia2020big}, engineering~\cite{chiang2017big}, social science~\cite{felt2016social}) to analyze data and obtain insights for effective communication and decision-making. With a suitable visualization, users can observe the tendency of one variable or the correlation of multiple variables. It is also more intuitive for users to choose from generated visualizations than from raw data ~\cite{vartak2015seedb}.
Driven by the intuitive nature of visualizations in data exploration, visualization recommendation systems aim to enhance user experience by accelerating the analysis and exploration of large datasets. 
In practice, different users usually have different backgrounds and purposes (e.g., skill levels, analytic tasks). Thus, it is highly desired to quickly receive and understand user preferences to adapt the recommendation model to more personalized visualization recommendations ~\cite{mutlu_vizrec_2016}. 

Previous works on personalized visualization recommendation~\cite{ojo_visgnn_2022,qian_personalized_2021,mutlu_vizrec_2016,vartak2015seedb} are usually non-interactive and heavily rely on the user embeddings for cold-start of new users. For a new user without any background knowledge, previous works~\cite{ML-based-Vis-Rec,qian_personalized_2021} recommend visualizations based on the averaged user embedding, and thus is not able to capture the actual user's personalized preferences for visualizations.
For example, a software engineer wants to understand the resource usage of a program over time, but the system log contains so many variables that finding the best visualization would be overly time-consuming. With traditional methods, the agent can recommend the most significant visualizations based on the statistical features of the variables, but fails to give more accurate recommendations that rely on the user intention revealed during the interactions. As a result, their models cannot efficiently explore the search space and quickly adapt to the user's real-time feedback in an interactive fashion.

In this paper, we propose an interactive system for rapid and personalized visualization recommendation.
As shown in Fig. ~\ref{fig:head}, by utilizing the user feedback from previous interactions, the agent can efficiently explore and narrow down the search space rapidly to recommend better visualizations. It is non-trivial to modify and apply bandit algorithms to the PVisRec problem.
\textit{Traditional contextual bandit algorithms} ~\cite{chu2011contextual,li2010contextual} usually suffer from a large action space in the PVisRec setting where users give feedback on a combination of items and visualizations. Compared with combinatorial bandits~\cite{qin_contextual_2014}, non-combinatorial bandits will have an action space with higher ranks, since recommending a visualization is equivalent to recommending attributes and one configuration. 
\textit{Contextual Combinatorial bandit algorithms} ~\cite{qin_contextual_2014, dong2022combinatorial} provide a reduced action space in the recommendation of visualizations, but suffer from the gap between the real reward and the estimated reward. In the PVisRec setting, a user may like an attribute pair and a configuration but dislike their combination as a visualization, which creates a novel kind of bias in the estimation of the reward. 
\textit{Semi-bandit algorithms} ~\cite{li2016contextual,peng2019practical} introduce bias terms to solve the problem of reward estimation, but their combinatorial implementation relies on cascading assumptions, which conflicts with our setting where a user rates the items in combination independently. Plus, their bias terms are unlearnable, and thus these algorithms fail to model the underlying relation between configurations and attributes.

To provide an interactive personalized visualization recommendation system, we introduce \textbf{Hier-SUCB}, a contextual combinatorial semi-bandit with a hierarchical structure tailored to the PVisRec problem. To narrow the gap between the real and estimated reward, we model the bias in the estimation by proposing \emph{an additional bias term} in the exploration of attributes to represent the relation between attributes and configurations. Different from previous work on semi-bandits, our bias is learnable and estimates the interrelation of the visualization with an independent bandit. 
To apply the bias term to PVisRec, we construct \emph{a hierarchical bandit structure} that receives user feedback flexibly for visualizations, configurations, and attributes. The agent will decide on a configuration before evaluating the whole visualization that contains the bandit of bias term with larger action space, and thus accelerate the learning of our agent.
With the hierarchical design and bias term in our algorithm, we improve the regret bound from $O(\sqrt{Tln^3(nm^2T ln(T))})$ by SPUCB ~\cite{peng2019practical} to $O(\sqrt{Tln^3(m^2T ln(T))})$ where $T$ is overall rounds, $n$ is the number of configurations and $m$ is the number of attributes. We test the proposed hierarchical algorithm based on a synthetic experiment and a simulated real-world experiment validated by human evaluation.

To summarize, our contributions are:
\begin{itemize}
    \item We propose an interactive method that learns from real-time user input, avoiding the need for initial data and enabling personalization even in cold-start cases.
    \item We propose a learnable bias term to model configuration-attribute relations, with a hierarchical structure to enhance user experience and accelerate learning.
    \item We theoretically prove the superiority of \textbf{Hier-SUCB} with a lower regret bound and confirm it through experiments, outperforming other bandit algorithms and the offline method.
\end{itemize}

\section{Related Work} 
\subsection{Visualization Recommendation}
The first work on visualization recommendation focused entirely on simple rules defined by experts~\cite{mackinlay1986automating,feiner1985apex}.
It was followed by other rule-based systems including Voyager~\cite{vartak2017towards, wongsuphasawat2016voyager, wongsuphasawat2017voyager}, VizDeck~\cite{perry2013vizdeck}, DIVE~\cite{hu2018dive}, and many other systems ~\cite{choo2014visirr,gotz2009behavior}.
Such rule-based systems leverage a large set of rules for recommending visualizations and do not consider any learning or user personalization.
Recently, some work has focused on the ML-based visualization recommendation problem~\cite{ML-based-Vis-Rec,data2vis,demiralp2017foresight-vldb,kaur2015towards}. Qian et al.~\cite{qian_personalized_2021} introduced the problem of personalized visualization recommendation where user-level models are learned for recommending visualizations that are personalized based on past user interactions along with the data and configurations of those visualizations. They solve the problem of sparse user feedback by introducing the notion of meta-features to leverage feedback from visualizations across different datasets. However, these works focused on learning user-agnostic models for visualization recommendation, and thus are unable to be used for the personalized visualization recommendation problem. Other work such as VizRec~\cite{mutlu_vizrec_2016} is only applicable when there is a single dataset shared by all users, and the users have explicitly liked and tagged such visualizations, as the approach also uses meta-data.
Song et al.~\cite{song2022rgvisnet} further proposed a hybrid retrieval-generation framework for data visualization systems. 
Ojo et al.~\cite{ojo_visgnn_2022} represented the corpus of datasets and the visualizations as a large graph and proposed a graph neural network approach to achieve better personalized visualization recommendations, while Haotian et al. further represent visualization embedding in knowledge graphs ~\cite{li2021kg4vis}. However, these existing solutions are all offline and have a relatively low hit rate when recommending the top 1 visualization to the user. 
They are also not capable of learning user preferences through real-time interaction.

\subsection{Contextual Combinatorial Bandit}
Combinatorial bandit is a common way to solve the recommendation of multiple items in the online recommendation setting.
Qin et al.~\cite{qin_contextual_2014} formulated a combinatorial bandit in the contextual bandit setting, which combines the user feature vector in the contextual bandit and the reward function from combinatorial bandits. 
In each turn, the user is recommended a combination of items, whose reward is assumed to be the average of the related item reward. 
Some work by Gin et al.~\cite{qin_contextual_2014} extended the average function to any non-linear function that is monotonic and Lipschitz continuous. Based on the contextual bandit setting, Peng et al.~\cite{peng2019practical} introduced semi-bandit with bias to model dynamic feature vectors over time. 
Combinatorial recommendation is also discussed in this work, but with a cascading bandit assumption that was originally discussed in work by Li et al.~\cite{li2016contextual}. 
However, in the PVisRec setting, the reward given by the combination cannot be well represented by a monotonic setting, since the user preference of the visualization depends not only on the data attribute and visual configuration, but also on their relation when combined. Also, the cascading assumption is unrealistic in our personalized visualization recommendation setting, since there is no given mapping between the user feedback of a configuration and the attributes. Furthermore, the bias in previous semi-bandit works is neither learnable nor assigned to specific combinations, and therefore, fails to model the interrelation of visualization in PVisRec problem.

\subsection{Bundle Recommendation}
Previous work by Deng et al.~\cite{deng2021build} proposed a policy-based reinforcement learning (RL) approach that utilizes the user embedding and prior knowledge of items as a graph to recommend a bundle of items to users.
The user embedding is carefully studied and selected, and the cold start of new users relies on the similarity of the new user to previous users.
Chang et al.~\cite{chang2020bundle} modeled the interactions between users and items/bundles as well as the relations between items and bundles by a graph neural network. Noticing the bias caused by submodular functions, Mehrotra and Vishnoi ~\cite{mehrotra2023maximizing} give a theoretical analysis of bias in subset selection in recommendation systems
However, the visualization to be recommended to the user in the personalized visualization recommendation setting is often unique, as it is fundamentally tied to the attributes in the users' dataset being visualized along with the visual configuration.
This differs from previous works ~\cite{deng2021build,chang2020bundle} in bundle recommendation that explicitly assume the items in any such bundle being recommended are shared by all users.
Additionally, in personalized visualization recommendation, the user embedding related to the preference is too complex to be explicitly modeled by simple labels (\emph{e.g.}, gender, level of education, and age) in bundle recommendation. 
Furthermore, in bundle recommendation, the relation between items and bundles differs from the interrelation among items, which is more important in the PVisRec problem.

\section{Problem Formulation}
In this section, we formulate interactive personalized visualization recommendation as a contextual combinatorial bandit problem with bias. 
We start with the traditional contextual combinatorial bandit scenario where an agent recommends a combination of items to a user and receives feedback for the combination. We then discuss the limitations of models mentioned in the related works, and propose a new hierarchical structure and bias term to improve the recommendation in PVisRec setting.

\subsection{Contextual Combinatorial Bandit}
Contextual combinatorial bandit algorithms are applied in various settings. In previous work done by Qin et al.~\cite{qin_contextual_2014}, the combination is assumed to have a reward that is a linear combination of rewards from related items, which is incompatible with visualization recommendation. 
Furthermore, previous works on contextual combinatorial bandits do not consider recommending items from more than one category, which is a critical aspect of the visualization recommendation problem.

To overcome these issues, we formulate a novel contextual bandit problem that allows the recommender agent to choose 3 items from 2 different categories: configuration, attribute of the x-axis and attribute of the y-axis. Different from the former contextual bandit settings~\cite{qin_contextual_2014}, the agent has to select exactly one item from each category. 
We consider the stochastic k-armed contextual bandit problem from the 2 different categories, where the total number of rounds $T$ is known. 
For each round $t$, user plays action $a_t=(a_{c,t},a_{x,t},a_{y,t})$ , picking one configuration and two attributes. Picking the configuration is noted as $a_{c,t}$, picking x-axis attribute is noted as $a_{x,t}$ and picking y-axis attribute is noted as $a_{y,t}$.

For the category of configuration, we assume there are $n$ items. We denote the feature vector of configuration in round $t$ as $\mathbf x_{c,t}$. 
The feature vector is generated based on the historical data of visualization recommendation using collaborative filtering. 

For the category of attributes, we assume there are $m$ items and we need to select two items (x-axis and y-axis) each round. We denote the feature vector of x-axis and y-axis in round $t$ as $\mathbf x_{x,t}, \mathbf x_{y,t}$.
The feature vector is generated based on the statistical values extracted from the data points of this attribute, as done in~\cite{ML-based-Vis-Rec}. 

Following the definition of contextual bandits and the denotation above, in round $t$, the reward of configuration $r_{C,t}$, the reward of attribute $r_{A,t}$ and the reward of visualization $r_{V,t}$ is modeled as:
\begin{align}
  r_{C,t} &= \langle \theta_{C,t}, \mathbf x_{c,t} \rangle + \epsilon_{t,a_t}\\
  r_{A,t} &= \langle \theta_{A,t}, \mathbf x_{x,t} \rangle + \langle \theta_{A,t}, \mathbf x_{y,t} \rangle + \epsilon_{t,a_t}\\
  r_{V,t} &= r_{C,t}+r_{A,t}
  \label{equ:comb}
\end{align}
where $\epsilon_{t,a_t}$ is a noise term with sub-Gaussian distribution and zero mean. We follow conventional assumption in bandit problems ~\cite{vakili2021optimal} that assume the inconsistent user feedback can be simplified with a sub-Gaussian distribution with zero mean. $\theta_{A,t}$ and $\theta_{C,t}$ are learnable parameters for the estimation of users' preference for attributes and configurations. Due to the hierarchical structure, the feature vectors of configurations have smaller dimensions than the feature vectors of attributes, so we use $\theta_{C,t}$ to separately denote user preference for configurations. 

To reduce calculation, we use the same $\theta_{A,t}$ to denote user preference for x-axis and y-axis attributes as they have same feature vector dimensions. Using the same set of parameter will not affect the estimation. According to the contextual bandit setting, we assume the reward of visualization is a linear combination of the configuration reward, x-axis reward, y-axis reward and bias term. If we use different parameters $\theta_{x,t},\theta_{y,t}$ for x-axis and y-axis, their sum would be $<\theta_{x,t}, \mathbf x_{x,t}>+<\theta_{y,t}, \mathbf x_{y,t}>$. But we can always concatenate $\theta_{x,t},\theta_{y,t}$ to be $\theta_{A,t}={\theta_{x,t}^{(1)},\theta_{x,t}^{(2)},...,\theta_{x,t}^{(n)},\theta_{y,t}^{(1)},\theta_{y,t}^{(2)},...,\theta_{y,t}^{(n)}}$, which has double number of dimension compared to the parameter of x-axis or y-axis. Meanwhile, we concatenate zero vector to $\mathbf x_{x,t}, \mathbf x_{y,t}$ so that the new vector $\mathbf x_{x,t}^\prime, \mathbf x_{y,t}^\prime$ would be ${x_1,x_2,...,x_n,0,0,...,0}$ and ${0,0,...,0,y_1,y_2,...,y_n}$. Calculating the inner value between $\theta_{A.t}$ and $\mathbf x_{x,t}^\prime, \mathbf x_{y,t}^\prime$ would be the same as using two parameters $<\theta_{x,t}, \mathbf x_{x,t}>+<\theta_{y,t}, \mathbf x_{y,t}>$.

We further represent the parameters $\theta_{c,t},\theta_{x,t},\theta_{y,t}$ with matrix $V_{C,t},V_{A,t}$ and vector $b_{C,t},b_{A,t}$. They are initialized with:
\begin{align}
    \theta_{C,t}&=V_{C,t}+b_{C,t}=\mathbf I_d+\mathbf 0_d\\ 
    \theta_{A,t}&=V_{A,t}+b_{A,t}=\mathbf I_d+\mathbf 0_d \label{eqn:theta_def}  
\end{align}
In each turn $\theta_{C,t},\theta_{A,t}$ is updated by updating $V_{C,t},b_{C,t}$ and $V_{A,t},b_{A,t}$.
\begin{align}
    V_{C,t}&=V_{C,t-1}+\mathbf x_{c,t-1}\mathbf x_{c,t-1}^T\\
    V_{A,t}&=V_{A,t-1}+(\mathbf x_{x,t-1}+\mathbf x_{y,t-1})(\mathbf x_{x,t-1}^T+\mathbf x_{y,t-1}^T)\\
    b_{C,t}&=b_{C,t-1}+ r_{C,t-1}\mathbf x_{c,t-1}\\
    b_{A,t}&=b_{A,t-1}+ r_{A,t-1}\mathbf (\mathbf x_{x,t-1}+\mathbf x_{y,t-1})
   \label{eqn:theta_update}
\end{align}

The goal of the agent is to minimize the cumulative regret and maximize the average reward in T rounds through repeated item combination recommendations. With the reward defined above, we want to optimize a cumulative regret defined as follows:
\begin{equation}
    Reg(T)=\sum_{t=1}^T(r_{t}^* - r_{t,a_t})
    \label{oldReg}
\end{equation}
where $r_t^*$ refers to optimal reward in round $t$.

In contextual combinatorial bandit, the reward of target combination is assumed to be a linear combination of reward of its items, which is unrealistic in the PVisRec setting. For the finite-dimensional feature vectors of attributes, it is impossible to find parameter $\theta$ that ensures $\epsilon$ is a zero-mean random variable~\cite{peng2019practical}.
More specifically, in the PVisRec problem user may like both the attributes and configuration but not their combination. To allow the agent to be aware of this difference, we need to have a different set of parameter $\theta_A$ for the attribute pair $A_{1,2}$, which is a fundamental contradiction in the contextual bandit setting~\cite{qin_contextual_2014}.

\subsection{Bias Term in PVisRec} 
We introduce a learnable bias term in the definition of reward function. It is an additional term in the reward function that is learnable and represents the relation between configuration and attributes. Based on Eq.~\ref{equ:comb} above, we rewrite the reward function of visualization in round $t$ with action $a_t$ as:
\begin{equation}
    r_V(t,a_t) = \langle \theta_{C,t}, \mathbf x_{c,t} \rangle +\langle \theta_{A,t}, \mathbf x_{x,t} \rangle + \langle \theta_{A,t}, \mathbf x_{y,t} \rangle + r_{\gamma,t}+\epsilon_{t,a_t}\\
\end{equation}
where $\gamma$ is an arm-specific bias term and $\epsilon_{t,a_t}$ sub-Gaussian noise term with zero mean. We model the bias term as a simple multi-armed bandit that has the same action space as the visualization given configuration $C_k$, attribute pair $A_i,A_j$. We assume the reward of bias term $r_{\gamma,t}$ as a function of visualization reward, configuration reward and attribute reward:
    \begin{equation}
        r_{\gamma,t}=f(r_{V,t},r_{C,t},r_{A,t})
    \end{equation}
There could be various non-linear functions $f(r_V,r_C,r_A)$, and the rewards in this equation follow the definition of Bernoulli reward $r_{C,t},r_{A,t},r_{V,t} \in \lbrace 0,1 \rbrace$.

\subsection{Hierarchical Structure in PVisRec}
In PVisRec problem, we derive a hierarchical structure both in the interaction with users and the learning process of the agent. The user is first asked to determine whether visualization $V_t$ is favorable by giving a Bernoulli feedback $r_{V,t}\in \lbrace 0,1 \rbrace$. 
If the answer is positive, the reward of configuration is $r_{C,t}=1$ and the reward of the attribute pair is $r_{A,t}=1$.
However, if the answer is negative, then the user will be further queried for the preference of configuration and attributes to get $r_{C,t},r_{A,t}$. 
In this system, the user provides feedback for both the combinatorial contextual bandits and the bias term bandits without degrading the user experience.
By evaluating the bias term with the feedback on the entire visualization, we avoid the cascading assumption that would otherwise harm the user experience.

The agent in our algorithm is also implemented with a hierarchical structure. In each round $t$ the agent will first decide the optimal action in configuration $a_{c,t}$, and then decides the $a_{x,t},a_{y,t},\gamma_t$ based on the summed upper confidence bound of configuration reward, attribute reward and the bias. Since the configuration usually has a smaller arm pool as studied in Fig. ~\ref{fig:preprocess}, the agent only needs to estimate the reward of attributes and the bias term for the majority of time. With such breakdown the agent is free from $O(mn^2)$ action space in the bias term, and can thus work with lower regrets.

\section{Algorithm \& Theoretical Analysis}
In this section, we propose our combinatorial bandit algorithm for interactive personalized visualization recommendation, called Hierarchical Semi UCB (Hier-SUCB).

\subsection{Hier-SUCB}

Inspired by SPUCB~\cite{peng2019practical}, we develop a combinatorial contextual semi-bandit with a learnable bias term and a hierarchical structure. The structure includes a hierarchical agent to optimize the exploration on biased combinatorial setting and a hierarchical interaction system to get detailed user feedback without hurting user experience. The algorithm maintains two sets of upper confidence bounds (UCB) including the UCB of configurations $U(c)$ and visualizations $U(c,x,y)$ with a given configuration $c$. More formally, let $U(c)$ and $U(v)=U(c,x,y)$ be defined as:
\begin{equation}
    U(c_t)=\theta_{C,t}^T \mathbf x_{c,t}+\rho_{c,t}
\label{eqn:ucfg}
\end{equation}
\begin{equation}
    U(v_t)=\theta_{C,t}^T \mathbf x_{c,t}+\theta_{A,t}^T (\mathbf x_{x,t}+\mathbf x_{y,t}) + \gamma_t + \rho_{c,t} +\rho_{a,t} +\rho_{\gamma,t}
\label{eqn:uvis}
\end{equation}
The confidence radius of attribute and configuration $\rho_a,\rho_c$ is defined as:
\begin{equation}
    \rho_{k,t}=\sqrt{\mathbf x_{k,t}^T(\mathbf I_d+\mathbf x_{k,t} \mathbf x_{k,t}^T) \mathbf x_{k,t}}, k\in\lbrace a,c \rbrace
    \label{eqn:rhoca}
\end{equation}
where $\mathbf{x}_{k,t}$ is the embedding vector of attribute or configuration in round $t$ and $\mathbf I_d$ refers to identity matrix with the same dimension $d$ as $\mathbf x_t$. According to UCB~\cite{auer2010ucb}, the confidence radius of bias $\rho_\gamma$ defined as
\begin{equation}
    \rho_{\gamma,t}=\sqrt{2ln(T) / t_\gamma}
    \label{eqn:rhobias}
\end{equation}
where $t_\gamma$ is the time that bias $\gamma$ has been played.

In each turn, the agent first computes the UCB of all configurations with Eq.~\ref{eqn:ucfg} and then selects the configuration with optimal UCB. 
Afterward, the agent evaluates the upper confidence bounds of all visualizations with Eq.~\ref{eqn:uvis} to select the optimal. Then, the agent will ask for user feedback on the recommended visualization: 
if it is positive, the agent will automatically take the configuration and attributes as positive; otherwise, it will further ask for user feedback on the configuration and attributes separately. 

Intuitively, adding a bias term in the estimation of visualization reward can improve the accuracy of recommendation, because in the worst case we can assume it is the visualization reward and explore in a large action space. By designing appropriate reward function for the bias term, the bias term can serve as a correction term for cases that user likes the configuration and attributes but not the visualization. With the hierarchical structure of our agent, we further narrow down the large action space of the bias term. The agent quickly converges in configuration bandit with less item pool, so that it can have more exploration of attribute and bias terms with larger item pools.

\begin{algorithm}
    \SetAlgoLined
        Initialize $\theta_{C,t},\theta_{A,t}, \gamma_t$ (Eq. ~\ref{eqn:theta_def})\;
	\For{$t=1,2,...T$}{
	
        \For{$a_{c,t}=1,2,...n$}{
        Compute UCB $U(c_t)$ (Eq. ~\ref{eqn:ucfg})\;
        }
        Select $c_t=\textbf{argmax}(U(c_t))$\;
        \For{$a_{x,t}=1,2,...m$}
            {
            \For{$a_{y,t}=1,2,...m$}
                {
                Compute UCB $U(c_t,x_t,y_t)$ (Eq. ~\ref{eqn:uvis})\;
                }      
            }
        Select $V_{t}=\textbf{argmax}(U(c_t,x_t,y_t))$\;
        \uIf{$r_{V,t}==1$}
        {$r_{C,t}\leftarrow 1,r_{A,t}\leftarrow 1$\;}
        \Else{ask for $r_{C,t},r_{A,t}$\;}
        Update $\theta_{C,t},\theta_{A,t}, \gamma_t$ (Eq. ~\ref{eqn:theta_update})\;
        Update $\rho_{c,t},\rho_{a,t}, \rho_{\gamma,t}$ (Eq. ~\ref{eqn:rhoca},~\ref{eqn:rhobias})\;
        }
\caption{Hier-SUCB}

\end{algorithm}

\subsection{Regret Analysis}
The regret of Hier-SUCB comes from the exploration of preferred configuration, attribute pair and learning the bias term. The exploration of preferred configuration and attribute pair can be reduced to general combinatorial bandit problem. Learning bias term can be viewed as a general bandit problem with constraints. For more detailed analysis of regret bound, we consider the regret of round $t$ under four cases when the user provides negative feedback to the visualization:
\begin{enumerate}
    \item Like configuration $c_t$ and attribute pair $\lbrace x_t,y_t \rbrace$ 
    \item Like attribute pair $\lbrace x_t,y_t \rbrace$ but not configuration $c_t$
    \item Like configuration $c_t$ but not attribute pair $\lbrace x_t,y_t \rbrace$
    \item Dislike configuration $c_t$ and attribute pair $\lbrace x_t,y_t \rbrace$
\end{enumerate}

For case (1), we provide the regret bound by analyzing the bias term converges in certain rounds.
\begin{lemma}
    The reward gap between optimal and sub-optimal bias $\gamma$ is bounded with the overall round $T$ and the time $t_\gamma$ that $\gamma$ has been played for.
    \begin{equation}
        \Delta_\gamma \leq \sqrt{\frac{ln(T)}{t_\gamma}}
    \end{equation}
    \label{lem:case1}
\end{lemma}
\begin{proof}
    having a positive configuration and attributes while negative visualization implies:
\begin{equation}
    U(c,a)\geq U(c^\ast,a^\ast)
\end{equation}
where $c^\ast, a^\ast$ refers to configuration and attributes of preferred visualization. Notably, $c,a$ may also receive positive feedback from user, but their combination is not preferred. In such case, $\Delta_c=\Delta_a=0$, and we can bound the regret with bias:
\begin{equation}
    \Delta_{bias} \leq \rho_{c,t} +\rho_{a,t}+ \rho_{\gamma,t} - \rho_{c,t}^\ast -\rho_{a,t}^\ast -\rho_{t,\gamma}^\ast
\end{equation}
The round that $\gamma^\ast$ is updated given $c,a$ should be less than either $t_a^\ast$ or $t_c^\ast$, we define $t_{max}^\ast=max(t_a^\ast,t_c^\ast)\geq t_{min}^\ast=min(t_a^\ast,t_c^\ast)\geq t_\gamma^\ast$. Using the definition of UCB, we can bound the gap of bias by
\begin{align}
     \Delta_{bias} &\leq  3\sqrt{\frac{2ln(T)}{t_\gamma}}-3\sqrt{\frac{2ln(T)}{t^\ast_{max}}}\leq  3\sqrt{\frac{2ln(T)}{t_\gamma}} \\
     t_\gamma &\leq 18ln(T) \frac{1}{\Delta_{bias}^2}
\end{align}
\renewcommand\qedsymbol{}
\end{proof}

With ~\ref{lem:case1}, we can bound the regret bound of case (1) by:
\begin{equation}
    Reg_1=\mathbb E\lbrack t_\gamma \rbrack \Delta_{bias}\leq 18ln(T)/\Delta_{bias} = O(ln(T))
\end{equation}

Notably, cases (2) and (4) are bounded by the rapid convergence of the confidence radius of configurations, thus, we consider when the agent recommends configuration. We derive the following lemma with $s^c_t$ representing the time that the configuration arm of action $a_t$ in round $t$ has been played. 
\begin{lemma}
Following the proof in LinUCB ~\cite{chu2011contextual}, we can bound The gap between optimal and sub-optimal reward is bounded by the following equation with probability $1-\delta/T$:
 \begin{equation}
\label{eqn:semi}
    |r_{t}^*- r_{t,a_t}| \leq \alpha \sqrt{-2log(\delta/2)/s^c_t}
\end{equation}
\end{lemma}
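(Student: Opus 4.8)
The plan is to specialize the confidence-bound argument of LinUCB~\cite{chu2011contextual} to the configuration layer of the hierarchy, since in cases (2) and (4) the sub-optimality is driven entirely by how well the configuration bandit has converged, and $s^c_t$ in the statement counts exactly the plays of the selected configuration. First I would recall that the configuration reward is linear, $r_{C,t}=\langle\theta_{C,t},\mathbf x_{c,t}\rangle+\epsilon_{t,a_t}$ with zero-mean sub-Gaussian noise, and that the agent selects $c_t=\textbf{argmax}\,U(c_t)$ with $U(c_t)=\theta_{C,t}^T\mathbf x_{c,t}+\rho_{c,t}$. The first step is the standard optimism argument: conditioned on the event that every configuration's true mean lies inside its confidence interval, the UCB of the optimal configuration upper-bounds its true reward, and because $c_t$ maximizes the UCB we obtain $r_t^\ast\le U(c_t)$; pairing this with the lower confidence bound on the played arm yields the familiar gap inequality $|r_t^\ast-r_{t,a_t}|\le 2\rho_{c,t}$ up to the scaling constant $\alpha$.

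The core of the argument is the concentration inequality that justifies this good event. Following Chu et al., I would apply a self-normalized Azuma--Hoeffding tail bound to the accumulated sub-Gaussian noise, giving for a fixed round, with probability at least $1-\delta$,
\begin{equation}
    |\theta_{C,t}^T\mathbf x_{c,t}-\langle\theta_C,\mathbf x_{c,t}\rangle|\le \alpha\,\rho_{c,t},
    \qquad \alpha=1+\sqrt{\tfrac{1}{2}\log(2/\delta)}.
\end{equation}
A union bound over the $T$ rounds converts the per-round failure probability $\delta$ into the $\delta/T$ stated in the lemma. Combining this concentration with the optimism step of the first paragraph gives $|r_t^\ast-r_{t,a_t}|\le 2\alpha\,\rho_{c,t}$, which reduces the problem to controlling the confidence radius $\rho_{c,t}$.

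The last step is to translate $\rho_{c,t}$ into the $1/s^c_t$ rate. Each time the configuration arm of $a_t$ is played, its covariance matrix receives one rank-one update $\mathbf x_{c,s}\mathbf x_{c,s}^T$, so after $s^c_t$ plays the effective inverse-covariance weight of that arm contracts like $1/s^c_t$; together with the sub-Gaussian tail factor $\sqrt{2\log(2/\delta)}=\sqrt{-2\log(\delta/2)}$ this yields $\rho_{c,t}=O\!\left(\sqrt{-2\log(\delta/2)/s^c_t}\right)$ and hence the claimed bound, with the constant $2$ absorbed into $\alpha$. I expect this final reduction to be the main obstacle: the radius in Eq.~\ref{eqn:rhoca} is written with $(\mathbf I_d+\mathbf x_{c,t}\mathbf x_{c,t}^T)$ rather than an inverse design matrix, so I would have to argue that summing the rank-one updates over repeated plays of the same configuration is what produces the $s^c_t$ in the denominator, and this contraction step requires the feature vectors to be (approximately) normalized and bounded so that the linear configuration bandit behaves like a tabular arm with $s^c_t$ observations. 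Pinning down $\alpha$ consistently with the sub-Gaussian parameter is the remaining bookkeeping.
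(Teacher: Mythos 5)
Your reconstruction follows the only route available: the paper states this lemma without any proof at all, deferring wholesale to the LinUCB analysis of Chu et al.~\cite{chu2011contextual}, and your three-step sketch (concentration defines the good event, optimism of the argmax rule gives the gap bound $2\alpha\rho_{c,t}$, then the radius is converted to a $1/\sqrt{s^c_t}$ rate) is exactly the argument that citation contains, so in substance you match the paper while supplying strictly more detail than it does. Two corrections are needed. First, your union-bound bookkeeping runs backwards: a union bound over $T$ rounds at per-round failure level $\delta$ yields total failure probability $T\delta$, not $\delta/T$; to obtain the per-round guarantee $1-\delta/T$ stated in the lemma you should instantiate the tail bound at level $\delta/T$ directly, which inflates $\alpha$ by a $\sqrt{\log T}$ factor and is precisely the source of the extra logarithms in the paper's final regret $O(\sqrt{T\ln^3(m^2T\ln T)})$. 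Second, the obstacle you flag at the end is real, but it is the paper's defect rather than yours: Eq.~\ref{eqn:rhoca} as printed contains a single un-inverted rank-one update, under which $\rho_{c,t}$ does not decay at all as the arm is replayed; the intended radius is $\rho_{c,t}=\sqrt{\mathbf x_{c,t}^T V_{C,t}^{-1}\mathbf x_{c,t}}$ with $V_{C,t}$ accumulated according to the paper's own update rule. With that reading your contraction step closes immediately: each configuration arm has a fixed embedding, so after $s^c_t$ plays Sherman--Morrison gives
\begin{equation*}
\mathbf x^T\bigl(\mathbf I_d+s^c_t\,\mathbf x\mathbf x^T\bigr)^{-1}\mathbf x=\frac{\|\mathbf x\|^2}{1+s^c_t\|\mathbf x\|^2}=O(1/s^c_t),
\end{equation*}
requiring only boundedness of the features, not the approximate normalization you hedge on. One final gloss you share with the paper: the per-round concentration in Chu et al.\ is only valid under the SupLinUCB staging that enforces conditional independence of the samples entering the design matrix; neither the paper nor your sketch addresses this, so both inherit that caveat from the citation.
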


By summing Equation ~\ref{eqn:semi} with the expectation of round $T$, we derive the regret for case (2) and (4) as
\begin{equation}
    Reg_{2,4}=O(\sqrt{Tln^3(m^2Tln(T))})
    \label{eqn:reg_comb}
\end{equation}

For case (3), we first evaluate how many rounds the agent needs to recommend a positive configuration.
\begin{lemma}
With overall round $T$, the expected round for attribute exploration is 
\begin{equation}
    T-\frac{k}{\Delta_c^2}ln(T) 
    \label{eqn:tbound}
\end{equation}
\end{lemma}
\begin{proof}
The rounds to reach a positive configuration depend on the expectation of rounds that recommends a negative configuration. 
Thus by following the definition of UCB, we have:
\begin{equation}
    \mathbb{E}\lbrack t \rbrack= k\frac{ln(T)}{\Delta_c^2}
\end{equation}    
\renewcommand\qedsymbol{}
\end{proof}

To calculate the regret bound of case (3), we apply the upper bound of round $t$ derived in Equation ~\ref{eqn:semi} and get:
\begin{equation}
    Reg_3=O(\sqrt{(T-ln(T))ln^3(m^2(T-ln(T))ln(T-ln(T))}))
\end{equation}

Therefore, we can get the overall regret by summing up the regret of each case:
\begin{theorem}
The regret of Hier-SUCB can be bounded as:
\begin{align}
    Reg&=Reg_1+Reg_3+Reg_{2,4}\\
    &=O(\sqrt{Tln^3(m^2T ln(T))})
\end{align}
\end{theorem}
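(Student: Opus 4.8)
The plan is to decompose the cumulative regret $Reg(T)$ according to the four exhaustive cases of user feedback when a recommended visualization is rejected, then bound each case separately and sum. The case split is the natural one given the hierarchical structure: either the agent has correctly identified a good configuration and good attributes but mis-estimated their interaction (case 1), or it has erred on the configuration, the attributes, or both (cases 2, 3, 4). Because these four cases partition every round in which negative visualization feedback occurs, and rounds with positive feedback contribute zero instantaneous regret, the total regret is exactly $Reg_1 + Reg_{2,4} + Reg_3$, which is the identity I would open the proof with.

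First I would invoke Lemma~\ref{lem:case1} to handle case (1). There the configuration and attribute gaps vanish, $\Delta_c = \Delta_a = 0$, so all the regret is carried by the bias bandit; the lemma gives $t_\gamma \le 18\ln(T)/\Delta_{bias}^2$, and multiplying the expected number of bias plays by the per-round gap $\Delta_{bias}$ yields $Reg_1 = O(\ln(T))$. Next I would treat cases (2) and (4) together, since both are driven by rapid convergence of the configuration confidence radius: applying the LinUCB-style concentration of Lemma~2 (Eq.~\ref{eqn:semi}) and summing the per-round bound $\alpha\sqrt{-2\log(\delta/2)/s^c_t}$ over the horizon via the standard $\sum_t 1/\sqrt{s^c_t} = O(\sqrt{T})$ argument gives $Reg_{2,4} = O(\sqrt{T\ln^3(m^2 T\ln(T))})$. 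For case (3), I would first use Lemma~3 to argue that the agent spends only $O(\ln(T)/\Delta_c^2)$ rounds on a bad configuration, so that roughly $T - \frac{k}{\Delta_c^2}\ln(T)$ rounds remain for attribute exploration, and then substitute this effective horizon into the same concentration bound to obtain $Reg_3 = O(\sqrt{(T-\ln(T))\ln^3(m^2(T-\ln(T))\ln(T-\ln(T)))})$.

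The final step is the asymptotic simplification: summing the three bounds, I would observe that $Reg_1 = O(\ln(T))$ is dominated by the $\sqrt{T}$ terms, and that $Reg_3$ differs from $Reg_{2,4}$ only by additive $\ln(T)$ corrections inside the square root, which are absorbed asymptotically since $T - \ln(T) = \Theta(T)$ and $\ln^3(m^2(T-\ln T)\ln(T-\ln T)) = \Theta(\ln^3(m^2 T\ln T))$. Thus all three contributions collapse into the single dominant term $O(\sqrt{T\ln^3(m^2 T\ln(T))})$, which is the claimed bound. The key improvement to highlight is that the action-space factor is $m^2$ rather than the $n m^2$ that a non-hierarchical treatment of the bias bandit would produce, and this is precisely what the configuration-first decomposition buys us.

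The main obstacle I anticipate is justifying that the hierarchical structure genuinely removes the configuration count $n$ from the dominant logarithmic factor rather than merely deferring it. Concretely, one must verify that the bias bandit, whose raw action space is $O(n m^2)$, is only ever exercised after the configuration bandit has converged, so that the effective action space seen during the attribute/bias exploration phase is $O(m^2)$ conditioned on a fixed good configuration; this conditioning is what Lemma~3 is meant to supply, but stitching it rigorously to the concentration bound of Lemma~2 — and confirming that the error probabilities across the two nested UCB layers compose to the stated $1-\delta$ guarantee via a union bound over configurations and rounds — is the delicate part that the high-level case sum glosses over.
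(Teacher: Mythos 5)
Your proposal follows essentially the same route as the paper: the identical four-case decomposition of rounds with negative visualization feedback, Lemma~1 giving $Reg_1=O(\ln(T))$, the LinUCB-style concentration of Lemma~2 summed over the horizon for $Reg_{2,4}$, Lemma~3 supplying the effective horizon $T-\frac{k}{\Delta_c^2}\ln(T)$ for $Reg_3$, and the same final asymptotic absorption of all terms into $O(\sqrt{T\ln^3(m^2T\ln(T))})$. The delicate points you flag at the end (composing error probabilities across the nested UCB layers, and verifying the bias bandit's effective action space is $O(m^2)$ rather than $O(nm^2)$) are likewise glossed over in the paper's own terse summation, so your account matches both its argument and its level of rigor.
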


Notably, we reduce the original semi-bandit by improving $O(nm^2)$ to $O(m^2)$ by adding a hierarchical structure and decompose the combinatorial problem to multi-arm bandits and contextual semi-bandits. Regular combinatorial contextual bandit will apply Eq.~\ref{eqn:reg_comb} to all the attributes and configurations, where the term $m^2$ would be $nm^2$ in this case. With a hierarchical structure, the regret of the configuration is bounded by a contextual bandit. 
The regret of attribute pairs can be bounded with combinatorial contextual bandits as long as the configuration is preferred.
For the case (4) where attributes and configuration are preferred but their combination is not, we model an independent bias as multi-armed bandits  whose regret bound is $O(ln(T))$.

We also model the relation between the configuration and attribute with an extra bias as an individual bandit.
This helps improve the final accuracy of the personalized visualization recommendation, which we demonstrate later in the experiments using real-world datasets. 

\section{Experiments}
In this section, we design experiments to investigate the following fundamental research questions:
\begin{itemize}
    \item [\bf RQ1.] In a real-world dataset, is it a common case that a user likes a set of attributes and a specific configuration but dislikes the visualization resulting from the combination?
    \item [\bf RQ2.] Does the proposed bandit algorithm outperform the state-of-the-art offline method as well as other interactive methods that we adapted to our problem setting?
    \item [\bf RQ3.] Do the hierarchical structure and multi-armed bandit bias improve the performance?
\end{itemize}

\subsection{Experimental Setup} 

In this section, we discuss our experimental setup. 
First, we introduce the baseline algorithms, and then metrics used for evaluation.
\subsubsection{Baselines}
We compare our algorithm with two baseline algorithms that do not properly model the interrelation between the configuration and attributes in a visualization:
\begin{enumerate}
    \item \textbf{LinUCB} ~\cite{chu2011contextual}:  A non-combinatorial contextual bandit algorithm. This comparison demonstrates the Hier-SUCB ability to explore in a combinatorial setting. 
    \item \textbf{C2UCB} ~\cite{qin_contextual_2014}: A combinatorial contextual bandit algorithm. This comparison demonstrates the Hier-SUCB ability of more personalized and faster cold-start recommendation in PVisRec problem. 
    \item \textbf{Neuro-PVR} ~\cite{qian_personalized_2021}: A offline method trained for personalized visualization recommendation with neuro network. This comparison demonstrates the benefits of Hier-SUCB to provide personalized visualization recommendations with minimal samples.
\end{enumerate}

\subsubsection{Metric}
We follow C2UCB ~\cite{qin_contextual_2014} in the definition of cumulative regret and average reward. 
We get average reward of one round by computing the cumulative sum of the mean of the user feedback in each round. Similarly, cumulative regret is computed from the cumulative sum of the mean regret in each round. 
When comparing to the offline method Neuro-PVR ~\cite{qian_personalized_2021}, we compare the HR@K over iterations. 
Recall that HR@K (hit rate at $k$) is the fraction of the top $k$ recommended visualizations that are in the set of visualizations that are actually relevant to the user. 
We compare our method with the offline method Neuro-PVR ~\cite{qian_personalized_2021} as well as with other bandit algorithms.
For comparison, we use HR@1 as the bandit algorithms mentioned are designed for recommending one visualization.
Furthermore, HR@1 is naturally the most important, since it indicates how likely the approaches are to recommend the desired visualization to the user directly.

\subsection{Real-world Dataset}

\subsubsection{Dataset Description}
For our experiments, we used the Plot.ly dataset curated by Qian et al~\cite{qian_personalized_2021}. There are improved dataset Plotly.plus ~\cite{podo2022plotly} which provides better notation of user preference on visualizations, but for fair comparison with Neuro-PVR in ~\cite{qian_personalized_2021}, we keep the same dataset with Qian. The dataset contains information collected from the Plot.ly community, including the number of users, attributes, datasets, visualizations, and visualization configurations extracted from all the user-generated visualizations. The full corpus Plot.ly-full consists of 17469 users with 94419 datasets uploaded by these users. The corpus has a subset of 1000 datasets randomly, notated as Plot.ly-1k. We also take the attribute embedding generated by Qian from Plot.ly, representing the statistical features of attributes ~\cite{qian_personalized_2021}. The embedding provided includes 10-dimensional vector, 30-dimensional vector and a 1004-dimensional vector. 
From the HR@k and NDCG@k metrics, we conclude that 10-dimensional vectors achieve the best performance and use them as the attribute embedding in our experiments.

\subsubsection{Preprocess}
Given the Plot.ly data, we further explore other features of the dataset. We derive a histogram as shown in Fig. ~\ref{fig:preprocess}, which describes the distribution of the amount of attributes in different user datasets. We observe a significant long-tail effect that most dataset has less than 100 attributes. Following the preprocess in ~\cite{qian_personalized_2021}, we filter out any dataset with more than 100 attributes. We list all the configurations in the dataset, which has a small armpool.

\begin{figure}[!ht]
    \centering
    \begin{minipage}[t]{0.5\linewidth}
        \vspace{0pt} 
        \centering
        \includegraphics[width=\linewidth]{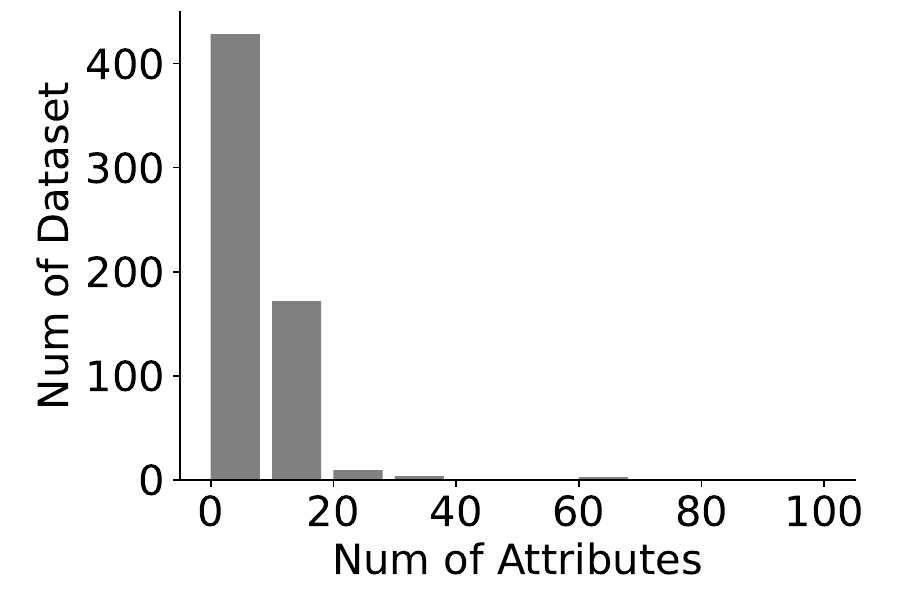}
    \end{minipage}%
    \hfill
    \begin{minipage}[t]{0.5\linewidth}
        \vspace{0pt} 
        \centering
        \begin{tabular}{|c|c|} 
        \hline
        \multicolumn{2}{|c|}{All Possible Configurations} \\ 
        \hline
        surface & scatter    \\ 
        \hline 
        scattergl & box      \\ 
        \hline 
        bar & mesh3d         \\ 
        \hline 
        scatter3d & contour  \\ 
        \hline 
        heatmap & histogram  \\
        \hline
        \end{tabular}
    \end{minipage}
    \captionlistentry[figure]{Distribution of the number of attributes in different user datasets}
    \captionlistentry[table]{All possible configurations the processed dataset}
    \captionsetup{labelformat=andtable}
    \vspace{-1em}
    \caption{Figure 3 shows the distribution of the number of attributes in different user datasets. Table 1 shows all possible configurations of the processed dataset. The item pool of configuration is relatively small}
    \Description{On the left there is a figure showing the distribution of the number of attributes in different user datasets. On the right there is a table showing all possible configurations of the processed dataset. It has 10 items in total, which means the item pool of configuration is relatively small}
    \label{fig:preprocess}
    \vspace{-2em} 
\end{figure}
\subsubsection{Simulator}
To evaluate the performance of our bandit algorithm, we need a simulator that resembles a real user and can react to the multiple-round recommendation from our agent. 
Similar to ~\cite{peng2019practical,zhang2020adaptive}, we build a simulator that gives Bernoulli feedback based on the setting of personalized visualization recommendation described previously. 
The user will provide three kinds of feedback: the feedback on attributes evaluating user preference for attribute pairs, 
the feedback on configurations evaluating user preference for the configurations, and the feedback on the entire visualization. 
The agent receives different rewards separately to update the bandits of attributes and the configuration independently. 
We also add a noise term to the Bernoulli reward provided by the simulator, which gives users a 5\% chance of providing the opposite response.
\subsubsection{User study of simulator}
To validate the performance of simulator in the real life, we conducted a user study based on the visualization generated by algorithms. We first recorded the visualization generated by Hier-SUCB and C2UCB given the feedback of the simulator. Then we recruited 51 participants (university students aged from 18 to 26, 47 of them reported experience of creating visualizations) and recorded their preference on 10 samples randomly selected from those visualizations. As shown in Fig. ~\ref{fig:ustudy_pie} (a), we compared the percentage of user preference in round 20 and 50 of the visualization generated in simulated experiments. From human evaluation, we observed an increase in the percentage that like either of the visualization, which implies the simulator can help bandit algorithms learn user preferences in the real-world setting.
\begin{figure}[b]
    \centering
    \includegraphics[width=\linewidth]{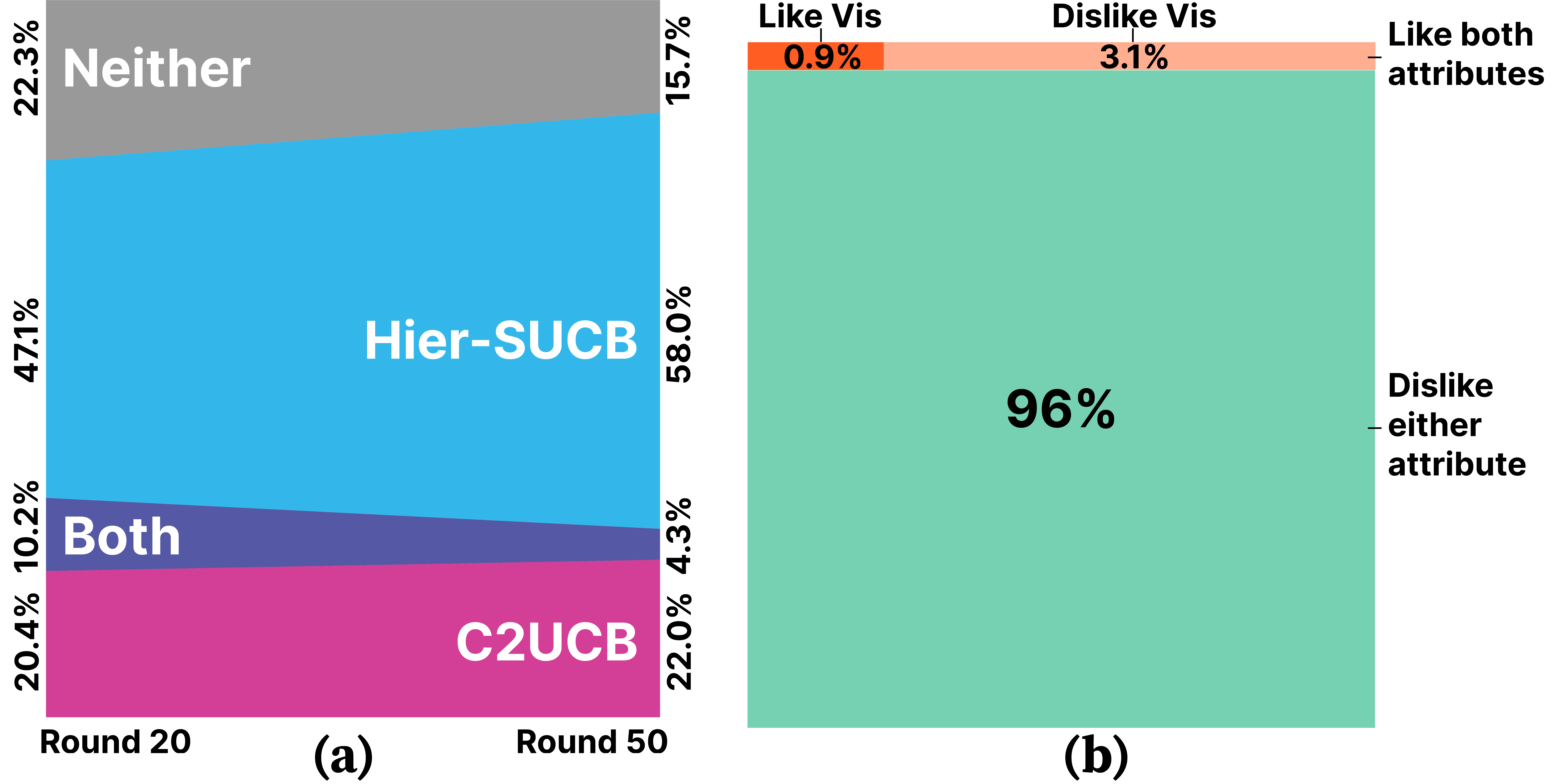}
    \caption{(a) Our user study shows that from round 20 to 50, the percentage of users that like either visualization increases, indicating the simulator can help bandit algorithms learn user preference.
    (b) In the visualization dataset Plot.ly, even if a user prefers a set of attributes and visual configurations, they may not prefer their combination.}
    \Description{The left figure shows the results of user study. From round 20 to 50, the percentage of users that like either visualization increases, indicating the simulator can help bandit algorithms learn user preference. The right figure shows the user preference of configuration and attributes in the Plot.ly dataset. It shows that even if a user prefers a set of attributes and visual configurations, they may not prefer their combination.}
    \label{fig:ustudy_pie}
\end{figure}

\subsection{Results}

We implement some experiments on the real-world dataset and simulator introduced above. In the following sections, we will introduce how we implement the experiment and how the experiment result gives answers to the research question.

\subsubsection{A Study of Visualizations in the Dataset}
To address RQ1, we perform analysis on the Plot.ly-full dataset to observe whether it is a common case that user likes attributes and configuration but dislikes their visualization. For each user in the dataset, we find all the preferred attributes and configurations. Then we examine the ratio of their combinations in all possible combinations of attributes and configurations. As shown in Fig. ~\ref{fig:ustudy_pie} (b), the combination of preferred attributes and configurations only makes up for 4.1\% of all combinations. 
We further examine the ratio of the preferred visualization in these combinations, and find only 22\% of the combinations are liked by the users as visualizations. This observation indicates that when the agent learns from this dataset, it is highly likely that the configuration and attribute pair liked by the user will not build up a preferred visualization.

\subsubsection{Performance in Synthetic and Real-world Dataset}
In this section, we compare the performance of bandit algorithms and offline method called Neuro-PVR~\cite{qian_personalized_2021} to show the advantage of online methods. To answer RQ2, we compare the hit rate of bandit algorithms and the offline method in the synthetic and real-world dataset (Plot.ly-1k and Plot.ly-full). We start with building a synthetic setting to simulate the setting of personalized visualization recommendation. As shown in Fig.~\ref{fig:synthetic}, the precision of Hier-SUCB and C2UCB are higher than others. Our method outperforms LinUCB and C2UCB with higher averaged reward. The experiment runs 200 rounds over 100 iterations.
\begin{figure}
	\centering
	\begin{subfigure}{0.48\columnwidth}
		\centering
		\includegraphics[width=\textwidth]{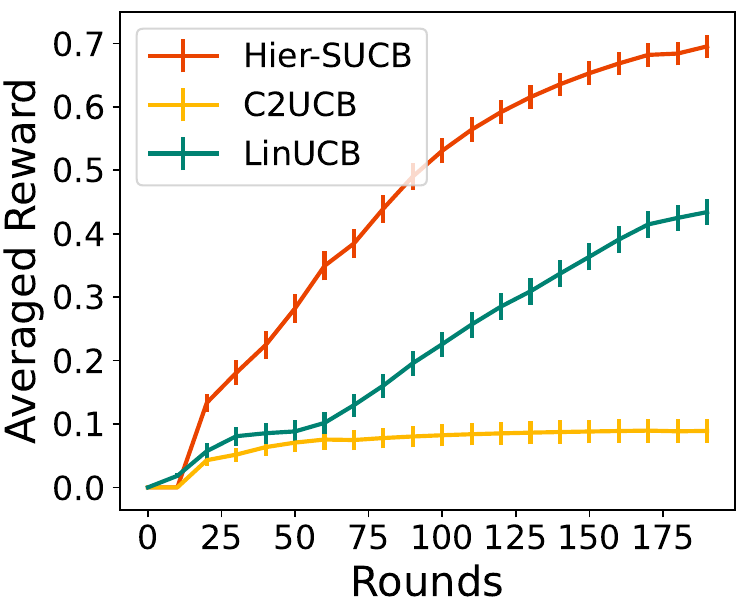}
		\caption{Averaged reward}
		\label{fig:sreward}
	\end{subfigure}
	\begin{subfigure}{0.48\columnwidth}
    	\centering
    	\includegraphics[width=\textwidth]{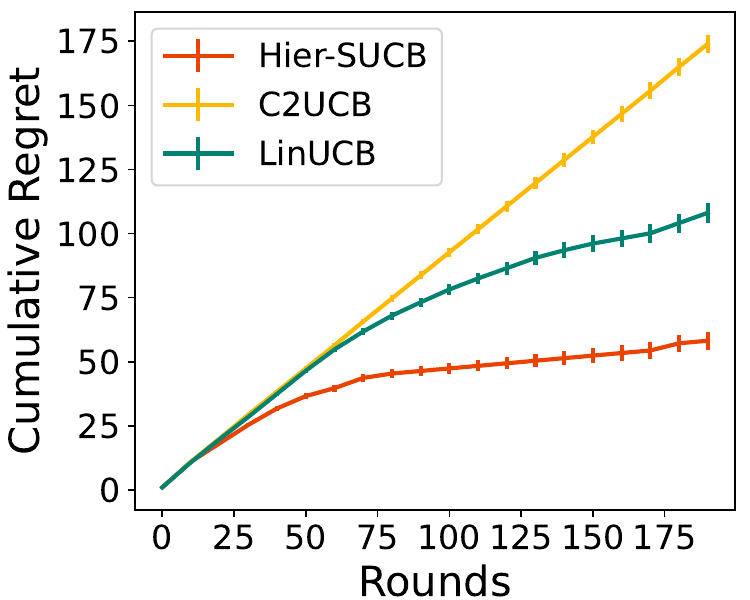}
    	\caption{Cumulative Regret}
    	\label{fig:sregret}
	\end{subfigure}
 \vspace{-1em}
	\caption{Comparison of the hit rate using C2UCB, LinUCB, Hier-SUCB in the synthetic data over 100 iterations. Hier-SUCB outperforms other algorithms in 200 rounds.} 
    \Description{Figures showing the hit rate using C2UCB, LinUCB, Hier-SUCB in the synthetic data over 100 iterations. Hier-SUCB outperforms other algorithms in 200 rounds.}
	\label{fig:synthetic}
    \vspace{-1em}
\end{figure}

We move forward to the real-world dataset Plot.ly. It runs 100 rounds for Plot.ly-1k and costs 0.01 second per round with Intel 13700K, and runs 200 rounds for Plot.ly-full. 
As shown in Fig.~\ref{fig:RealWorld}, in both datasets, C2UCB, Hier-SUCB converges and outperforms LinUCB. Also in the few shot setting, Hier-SUCB are higher than other two algorithms. In Plot.ly-1k, the hit rate of Hier-SUCB exceeds Neuro-PVR in 80 rounds, while in Plot.ly-full it exceeds Neuro-PVR in 160 rounds. The experiments validate that Hier-SUCB outperforms other bandit algorithms and the offline method.
\begin{figure}
	\centering
	\begin{subfigure}{0.48\columnwidth}
		\centering
		\includegraphics[width=\textwidth]{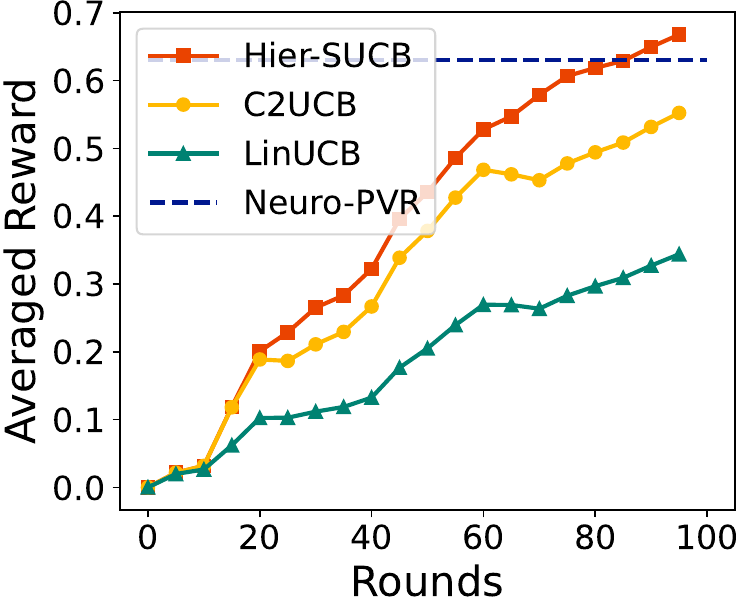}
		\caption{Plot.ly-1k}
		\label{fig:reward1}
	\end{subfigure}
	\begin{subfigure}{0.48\columnwidth}
    	\centering
    	\includegraphics[width=\textwidth]{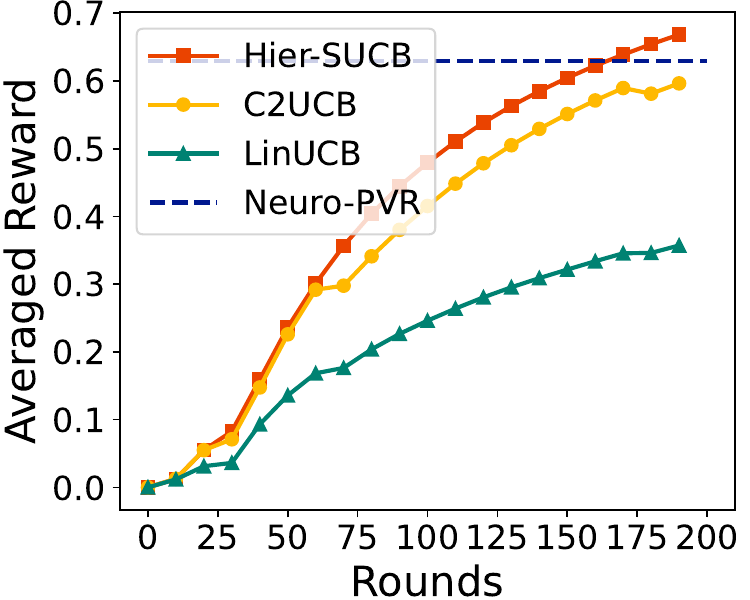}
    	\caption{Plot.ly-full}
    	\label{fig:reward2}
	\end{subfigure}
 \vspace{-1em}
	\caption{Comparison of the averaged reward (HR@1) using C2UCB, LinUCB, Hier-SUCB and Neuro-PVR (offline method). Hier-SUCB outperforms other bandit algorithms and exceeds the HR@1 of Neuro-PVR in round 80 and 160.}
        \Description{A figure showing the averaged reward (HR@1) using C2UCB, LinUCB, Hier-SUCB and Neuro-PVR (offline method). Hier-SUCB outperforms other bandit algorithms and exceeds the HR@1 of Neuro-PVR in round 80 and 160.}
	\label{fig:RealWorld}
 \vspace{-1em}
\end{figure}

We also conduct a case study of Hier-SUCB. As shown in Fig.~\ref{fig:caseStudy}, we compare visualization recommended by C2UCB and Hier-SUCB in round 1,10,20 and 50. All the visualizations are from the same subset of Plot.ly but evaluated by different users. It shows that the visualization recommended by Hier-SUCB are more likely to be preferred by users, indicating it is more personalized than C2UCB.

\begin{figure}
    \centering
    \includegraphics[width=0.8\linewidth]{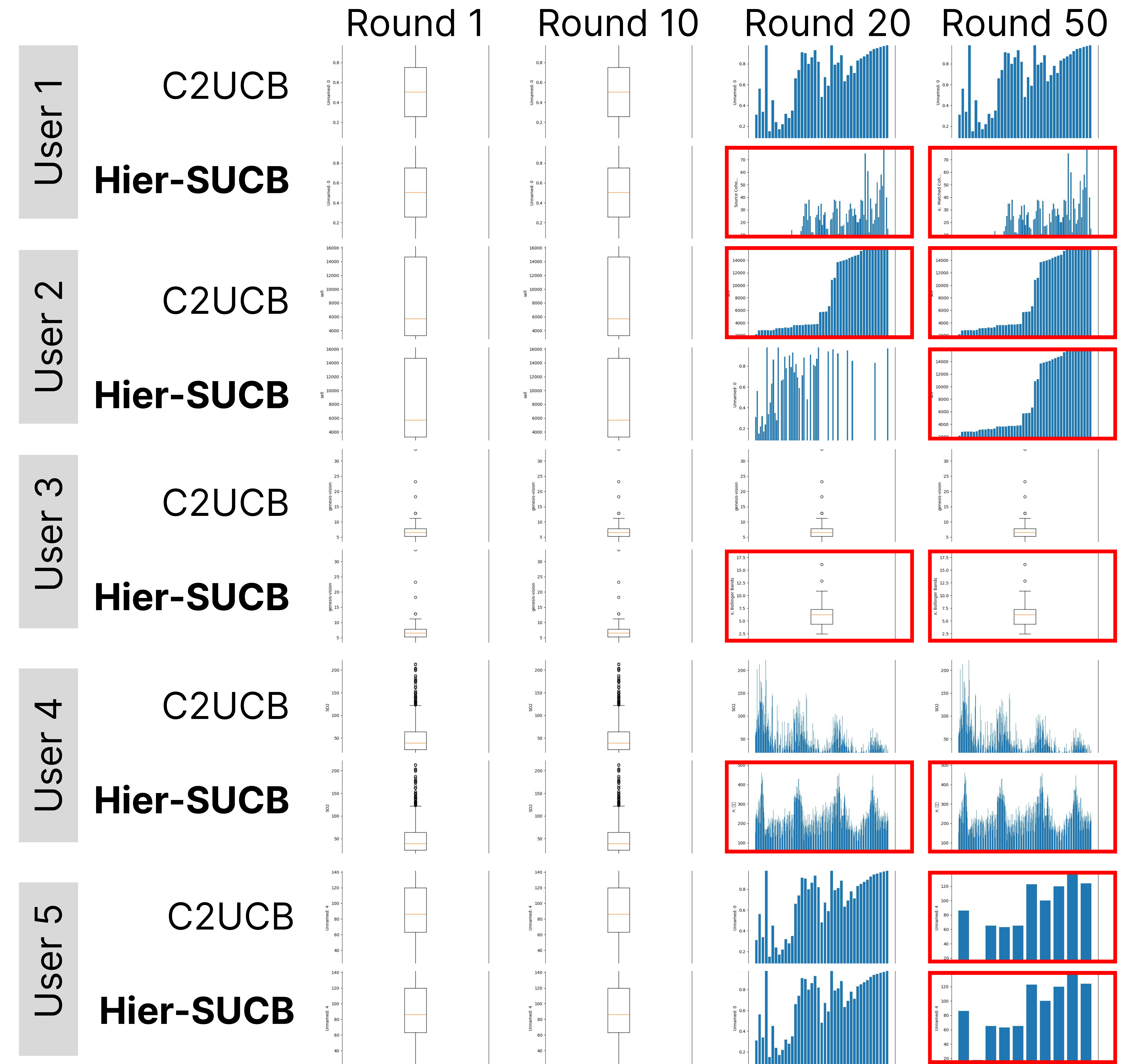}
    \vspace{-1em}
    \caption{A case study of the visualizations recommended by Hier-SUCB and C2UCB in round 1,10,20 and 50. The visualizations boxed by red rectangles are labeled as preferred.}
    \Description{A figure showing 5 cases of the visualizations recommended by Hier-SUCB and C2UCB in round 1,10,20 and 50. More visualizations recommended by Hier-SUCB are labeled as preferred compared to C2UCB.}
    \label{fig:caseStudy}
    \vspace{-2em}
\end{figure}

\subsubsection{A Study of Hier-SUCB Variants}
In this section, we do experiments to answer how the hierarchical structure and the bias term in bandit design help the learning process (RQ3). We first compare LinUCB with Hier-SUCB in Fig. ~\ref{fig:RealWorld}. In our experiment setting, LinUCB uses the combination of attributes and configuration as an arm. We observe LinUCB has slower convergence, lower averaged reward and higher cumulative regret over 200 rounds compared to Hier-SUCB. The observation validates our assumption in RQ3 that hierarchical structure helps the agent to learn faster.

To further confirm that the hierarchical structure contributes to the rapid convergence, we compare the average reward and cumulative regret of the original Hier-SUCB, a variation of Hier-SUCB without the hierarchical structure, and another variation that lacks the bias term in Plot.ly-full. In the variation without hierarchical structure, the agent no longer decides the configuration before deciding the visualization, so the visualizations are chosen from $O(nm^2)$ action space. In the variation without bias terms, the feedback provided by user is no longer used to update the bias term. In the experiments shown in Fig.~\ref{fig:var-regret}, we notice that the Hier-SUCB outperforms its variations with no configuration splitting in cumulative regret over 200 iterations. It is worth noting that Hier-SUCB without the bias term encounters hindrances around rounds 70 and 170. As a combinatorial contextual bandit, it is reasonable that Hier-SUCB without bias term is disturbed in the learning process by a biased estimated reward.

\begin{figure}
	\centering
	\begin{subfigure}{0.48\columnwidth}
		\centering
		\includegraphics[width=\textwidth]{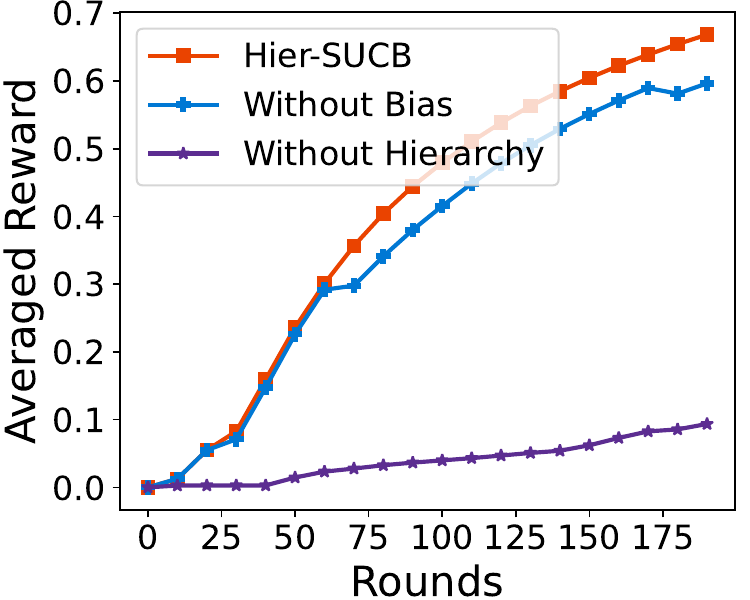}
		\caption{Averaged Reward}
		\label{fig:var-reward}
	\end{subfigure}
	\begin{subfigure}{0.48\columnwidth}
    	\centering
    	\includegraphics[width=\textwidth]{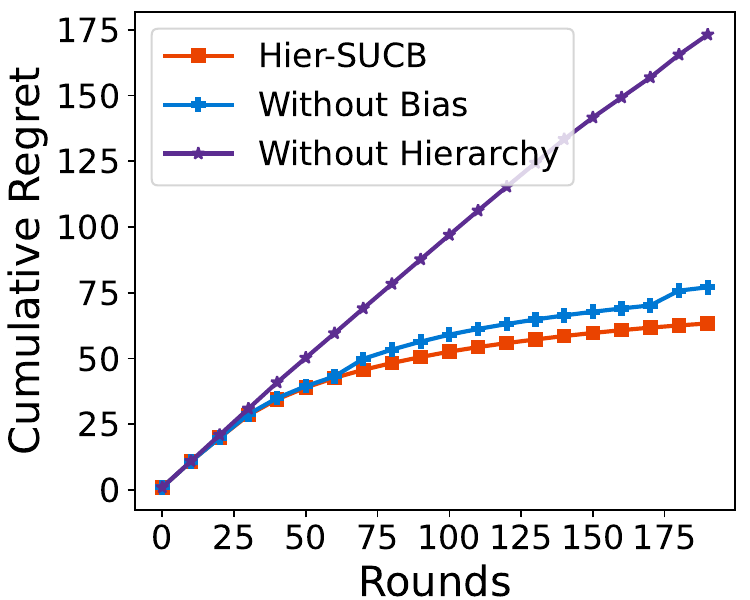}
    	\caption{Cumulative Regret}
    	\label{fig:var-regret}
	\end{subfigure}
 \vspace{-1em}
	\caption{Comparison of the average reward and cumulative regret of Hier-SUCB and its variants Hier-SUCB without bias terms and Hier-SUCB without hierarchical structure. The performance drops when either component is missing.} 
        \Description{A figure showing the average reward and cumulative regret of Hier-SUCB and its variants Hier-SUCB without bias terms and Hier-SUCB without hierarchical structure. The performance drops when either component is missing.}
	\label{fig:varient}
 \vspace{-1em}
\end{figure}


In this section, we conduct experiments using Plot.ly and its subset to demonstrate how the configuration-to-clustering mapping aids the learning process. We compare the reward and regret curves of LinUCB, C2UCB, and Hier-SUCB, and Neuro-PVR. We find Hier-SUCB outperforms other bandit algorithms and the offline method in the PVisRec setting. Also, as shown in Fig. ~\ref{fig:varient}, Hier-SUCB outperforms its variants over 80 iterations in Plot.ly-1k. The results confirm that the hierarchical structure enhances visualization exploration, while the bias term reduces the gap between real and estimated rewards in combinatorial contextual bandits.

\section{Conclusion}
For a more personalized and sample-efficient recommendation of visualization, we formulate a novel combinatorial contextual semi-bandit with hierarchical structure and learnable bias term. 
To narrow the gap between the real reward and the estimated reward, we further model a learnable bias term which measures the relation between configuration and attributes. To apply the combinatorial semi-bandit to the PVisRec setting, we propose a hierarchical bandit structure that receives flexible feedback from users and provides reasonable feedback to the bias term. Combining the learnable bias term and hierarchical bandit structure, we proposed a new approach called Hier-SUCB. We perform a regret analysis on the approach and derive an improved overall regret bound of $O(\sqrt{Tln^3(m^2T ln(T))})$. Through a synthetic experiment and simulated experiments validated by human evaluation, we observe that Hier-SUCB exceeds offline methods in HR@1 in 80 rounds and outperforms traditional bandits with higher rewards and lower cumulative regrets. We also compare Hier-SUCB with its variants in a simulated experiment to demonstrate the effectiveness of the learnable bias term and hierarchical bandit structure.
\bibliographystyle{ACM-Reference-Format}
\balance
\bibliography{ref}
\end{document}


\appendixpage

\section{Detailed Regret Analysis}
We consider the regret of round $t$ under four cases when the user provides negative feedback to the visualization:
\begin{enumerate}
    \item Like configuration $c_t$ and attribute pair $\lbrace x_t,y_t \rbrace$ 
    \item Like attribute pair $\lbrace x_t,y_t \rbrace$ but not configuration $c_t$
    \item Like configuration $c_t$ but not attribute pair $\lbrace x_t,y_t \rbrace$
    \item Dislike configuration $c_t$ and attribute pair $\lbrace x_t,y_t \rbrace$
\end{enumerate}

For case (1), we provide the regret bound by analyzing the bias term converges in certain rounds.
\begin{lemma}
    The reward gap between optimal and sub-optimal bias $\gamma$ is bounded with the overall round $T$ and the time $t_\gamma$ that $\gamma$ has been played for.
    \begin{equation}
        \Delta_\gamma \leq \sqrt{\frac{ln(T)}{t_\gamma}}
    \end{equation}
    \label{lem:case1}
\end{lemma}
\begin{proof}
    having a positive configuration and attributes while negative visualization implies:
\begin{equation}
    U(c,a)\geq U(c^\ast,a^\ast)
\end{equation}
where $c^\ast, a^\ast$ refers to configuration and attributes of preferred visualization. Notably, $c,a$ may also receive positive feedback from user, but their combination is not preferred. In this case, $\Delta_c=\Delta_a=0$ and we bound the regret with bias.
\begin{equation}
    \Delta_{bias} \leq \rho_{c,t} +\rho_{a,t}+ \rho_{\gamma,t} - \rho_{c,t}^\ast -\rho_{a,t}^\ast -\rho_{t,\gamma}^\ast
\end{equation}
The round that $\gamma^\ast$ is updated given $c,a$ should be less than either $t_a^\ast$ or $t_c^\ast$, we define $t_{max}^\ast=max(t_a^\ast,t_c^\ast)\geq t_{min}^\ast=min(t_a^\ast,t_c^\ast)\geq t_\gamma^\ast$. Using the definition of UCB, we can bound the gap of bias by
\begin{align}
     \Delta_{bias} &\leq  3\sqrt{\frac{2ln(T)}{t_\gamma}}-3\sqrt{\frac{2ln(T)}{t^\ast_{max}}}\leq  3\sqrt{\frac{2ln(T)}{t_\gamma}} \\
     t_\gamma &\leq 18ln(T) \frac{1}{\Delta_{bias}^2}
\end{align}
\
\end{proof}

With \ref{lem:case1}, we can bound the regret bound of case (1) by:
\begin{equation}
    Reg_1=\mathbb E\lbrack t_\gamma \rbrack \Delta_{bias}\leq 18ln(T)/\Delta_{bias} \leq O(ln(T))
\end{equation}

Notably, cases (2) and (4) are bounded by the rapid convergence of the confidence radius of configurations, thus, we consider when the agent recommends configuration. We derive the following lemma with $s^c_t$ representing the time that the configuration arm of action $a_t$ in round $t$ has been played. 
\begin{lemma}
Following the proof in LinUCB \cite{chu2011contextual}, we can bound The gap between optimal and sub-optimal reward is bounded by the following equation with probability $1-\delta/T$:
 \begin{equation}
\label{eqn:semi}
    |r_{t}^*- r_{t,a_t}| \leq \alpha \sqrt{-2log(\delta/2)/s^c_t}
\end{equation}
\end{lemma}

By summing Equation \ref{eqn:semi} with the expectation of round $T$, we derive the regret for case (2) and (4) as
\begin{equation}
    Reg_{2,4}=O(\sqrt{Tln^3(m^2Tln(T))})
    \label{eqn:reg_comb}
\end{equation}

For case (3), we first evaluate how many rounds the agent needs to recommend a positive configuration.
\begin{lemma}
With overall round $T$, the expected round for attribute exploration is 
\begin{equation}
    T-\frac{k}{\Delta_c^2}ln(T) 
    \label{eqn:tbound}
\end{equation}
\end{lemma}
\begin{proof}
The rounds to reach a positive configuration depend on the expectation of rounds that recommends a negative configuration. 
Thus by following the definition of UCB we have,
\begin{equation}
    \mathbb{E}\lbrack t \rbrack= k\frac{ln(T)}{\Delta_c^2}
\end{equation}    
\end{proof}

To calculate the regret bound of case (3), we apply the upper bound of round $t$ derived in Equation \ref{eqn:semi} and get 
\begin{equation}
    Reg_3=O(\sqrt{(T-ln(T))ln^3(m^2(T-ln(T))ln(T-ln(T))}))
\end{equation}

Therefore, we can get the overall regret by summing up the regret of each case:
\begin{align}
    Reg&=Reg_1+Reg_3+Reg_{2,4}\\
    &=O(\sqrt{Tln^3(m^2T ln(T))})
\end{align}

\section{Hier-SUCB Algorithm}
\subsection{Pseudo-code}
\begin{algorithm}
    \SetAlgoLined
        Initialize $\theta_{c,t},\theta_{a,t}, \gamma_t$ (Eq. \ref{eqn:theta_def})\;
	\For{$t=1,2,...T$}{
	
        \For{$a_{c,t}=1,2,...n$}{
        Compute UCB $U(c_t)$ (Eq. \ref{eqn:ucfg})\;
        }
        Select $c_t=\textbf{argmax}(U(c_t))$\;
        \For{$a_{x,t}=1,2,...m$}
            {
            \For{$a_{y,t}=1,2,...m$}
                {
                Compute UCB $U(c_t,x_t,y_t)$ (Eq. \ref{eqn:uvis})\;
                }      
            }
        Select $V_{t}=\textbf{argmax}(U(c_t,x_t,y_t))$\;
        \uIf{$r_{V,t}==1$}
        {$r_{C,t}\leftarrow 1,r_{A,t}\leftarrow 1$\;}
        \Else{ask for $r_{C,t},r_{A,t}$\;}
        Update $\theta_{c,t},\theta_{a,t}, \gamma_t$ (Eq. \ref{eqn:theta_update},\ref{eqn:bias_update})\;
        Update $\rho_{c,t},\rho_{a,t}, \rho_{\gamma,t}$ (Eq. \ref{eqn:rhoca},\ref{eqn:rhobias})\;
        }
\caption{Hier-SUCB}
\end{algorithm}

\subsection{Related Equations}
\begin{align}
    \theta_{c,t}&=V_{c,t}+b_{c,t}=\mathbf I_d+\mathbf 0_d\\ 
    \theta_{a,t}&=V_{a,t}+b_{a,t}=\mathbf I_d+\mathbf 0_d \label{eqn:theta_def}  
\end{align}
\begin{equation}
    U(c_t)=\theta_{C,t}^T \mathbf x_{c,t}+\rho_{c,t}
\label{eqn:ucfg}
\end{equation}
\begin{equation}
    U(c_t,x_t,y_t)=\theta_{C,t}^T \mathbf x_{c,t}+\theta_{A,t}^T \mathbf x_{y,t}+\theta_{A,t}^T \mathbf x_{x,t} + \gamma_t + \rho_{c,t} +\rho_{a,t} +\rho_{\gamma,t}
\label{eqn:uvis}
\end{equation}
\begin{align}
    V_{c,t}&=V_{c,t-1}+\mathbf x_{c,t-1}\mathbf x_{c,t-1}^T\\
    V_{a,t}&=V_{a,t-1}+\mathbf x_{a,t-1}\mathbf x_{a,t-1}^T\\
    b_{c,t}&=b_{c,t-1}+ r_{c,t-1}\mathbf x_{c,t-1}\\
    b_{a,t}&=b_{a,t-1}+ r_{a,t-1}\mathbf x_{a,t-1}
   \label{eqn:theta_update}
\end{align}
\begin{equation}
    \gamma_t=\frac{t-1}{t}\gamma_{t-1}+\frac{1}{t}r_{\gamma,t}
    \label{eqn:bias_update}
\end{equation}
\begin{equation}
    \rho=\sqrt{\mathbf x_t^T(\mathbf I_d+\mathbf x_t \mathbf x_t^T) \mathbf x_t}
    \label{eqn:rhoca}
\end{equation}
\begin{equation}
    \rho_{\gamma,t}=\sqrt{\frac{2ln(T)}{t_\gamma}}
    \label{eqn:rhobias}
\end{equation}
\section{Long Tail Effect}
The distribution of the number of attribute in Plot.ly-1k and Plot.ly-full is shown in Fig.\ref{fig:hist}:
\begin{figure}
	\centering
	\begin{subfigure}{0.48\columnwidth}
		\centering
		\includegraphics[width=\textwidth]{image/hist.pdf}
		\caption{Plot.ly-full}
	\end{subfigure}
	\begin{subfigure}{0.48\columnwidth}
    	\centering
    	\includegraphics[width=\textwidth]{image/hist_1k.pdf}
    	\caption{Plot.ly-1k}
	\end{subfigure}
	\caption{The distribution of attribute number in the datesets of Plot.ly-full Plot.ly-1k} 
	\label{fig:hist}
\end{figure}

\bibliography{ref.bib}